\newtheorem{definition}{Definition}
\newtheorem{lemma}{Lemma}
\newtheorem{theorem}{Theorem}
\newtheorem{corollary}{Corollary}
\newcommand{\cut}[1]{{}}
\newcommand{\B}{\mathcal{B}}
\newcommand{\seg}[1]{\text{\sout{\ensuremath{\:{#1}\:}}}}
\author[Ackerman et al.]{
   Eyal~Ackerman\affiliationmark{1}
\and
   Michelle~M.~Allen\affiliationmark{2}
\and
   Gill~Barequet\affiliationmark{3}
\and
   Maarten~L\"{o}ffler\affiliationmark{4}\\
\and
   Joshua~Mermelstein\affiliationmark{2}
\and
   Diane~L.~Souvaine\affiliationmark{2}
\and
   Csaba~D.~T\'{o}th\affiliationmark{2,5}
}
\title{The Flip Diameter of Rectangulations and Convex Subdivisions\thanks{A preliminary version of this work
appeared in the Proceedings of the 11th Latin American Theoretical INformatics Symposium (LATIN 2014), LNCS 8392, Springer, 2014, pp. 478--489.
Research on this paper was partially supported by the NSF grant CCF-0830734
and Netherlands Organization for Scientific Research (NWO) under grant~639.021.123.}}
\affiliation{
  Department of Mathematics, Physics, and Computer Science,
     University of Haifa at Oranim, Tivon, Israel \\
  Department of Computer Science, Tufts University, Medford, MA, USA \\
  Department of Computer Science, Technion---Israel Inst.\ of Technology,
     Haifa~32000, Israel \\
  Department of Computing and Information Sciences,
     Utrecht University, The Netherlands \\
  Department of Mathematics,
      California State University Northridge, Los Angeles, CA, USA
}
\keywords{rectangulation, combinatorial geometry}
\begin{document}
\publicationdetails{18}{2016}{3}{4}{646}
\maketitle
\begin{abstract}
   We study the configuration space of rectangulations and convex subdivisions
   of $n$ points in the plane. It is shown that a sequence of $O(n \log n)$
   elementary {\it flip} and {\it rotate} operations can transform any
   rectangulation to any other rectangulation on the same set of $n$ points.
   This bound is the best possible for some point sets, while
   $\Theta(n)$ operations are sufficient and necessary for others.
   Some of our bounds generalize to convex subdivisions of $n$ points
   in the plane.
\end{abstract}


\section{Introduction}

The study of rectangular subdivisions of rectangles is motivated by
VLSI floorplan design~\cite{LLY03} and cartographic visualization~\cite{EMS+12,KS07,Rai34}.
The rich combinatorial structure of rectangular floorplans has also
attracted theoretical research~\cite{BGP+08,Fel13}. Combinatorial
properties lead to efficient algorithms for the recognition and
reconstruction of the {\it rectangular graphs} induced by the corners of the
rectangles in a floorplan~\cite{HRK13,RNG04}, the {\it contact graphs} of the
rectangles~\cite{KK85,Ung53}, and the contact graphs of the horizontal and vertical
{\it line segments} that separate the rectangles~\cite{FMP95}. The number of
combinatorially different floorplans with $n$ rectangles is known to be
$B(n)=\Theta(8^n/n^4)$, the $n$th Baxter number~\cite{Yao:03}.

Rectangular subdivisions in the presence of points have also been studied in the literature.
Given a finite set $P$ of points in the interior of an axis-aligned rectangle $R$,
a {\it rectangulation} of $(R,P)$ is a subdivision of $R$ into rectangles
by pairwise noncrossing axis-parallel line segments such that every
point in $P$ lies in the relative interior of a segment (see Figure~\ref{fig:example}).
\begin{figure}[htbp]
   \centering
   \includegraphics[width=0.4\textwidth]{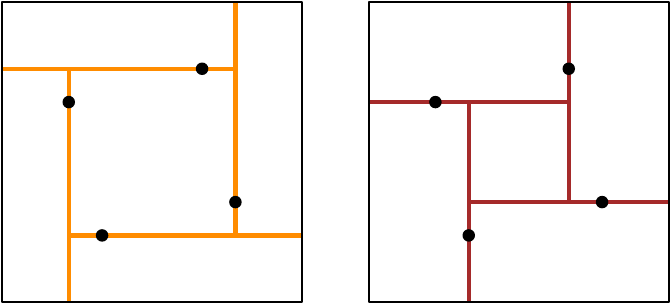}
   \caption{Two different rectangulations of a set of four points.}
   \label{fig:example}
\end{figure}

Finding a rectangulation of minimum total edge length has attracted attention~\cite{Calheiros:03,Cardei:01,Du:86,Gonzalez:Zheng:85,Gonzalez:Zheng:89,Gonzalez:Zheng:90,Lev:86,Lingas:82}
due to its applications in VLSI design and stock cutting in the presence of material defects.
This problem is known to be NP-hard~\cite{Lingas:82}, however, its complexity is unknown
when the points in $P$ are in general position in the sense that they have distinct $x$-
and $y$-coordinates, that is, the points are {\it noncorectilinear}.
It is not hard to see that in this case the minimum edge-length rectangulation must consist of exactly $n$
line segments~\cite{Calheiros:03}. For the rest of this paper, we consider only noncorectilinear point sets
$P$, and rectangulations determined by $|P|$ line segments, one containing each point in $P$.

The space of all the rectangulations of a point set $P$ (within a rectangle $R$) can be explored using the following two elementary operators introduced in~\cite{ABP06} (refer to Figure~\ref{fig:operations}).
\begin{figure}[htbp]
   \centering
   \includegraphics[width=.66\columnwidth]{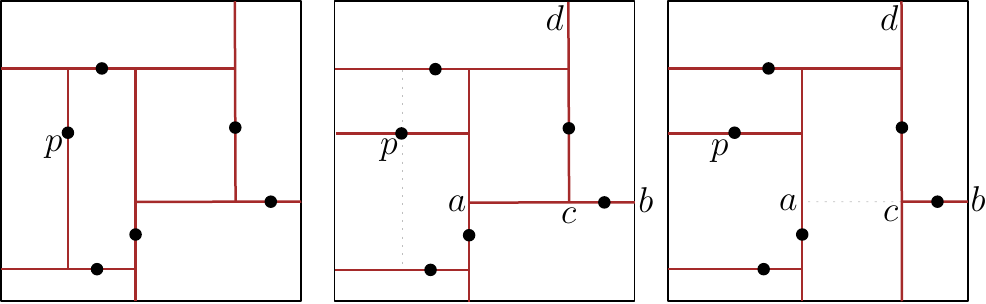}
   \caption{A rectangulation $r_1$ of a set of~6 points,
            $r_2={\mbox{\sc Flip}}(r_1,p)$, and
            $r_3=\mbox{{\sc Rotate}}(r_2,c)$.}
   \label{fig:operations}
\end{figure}

\begin{definition}\textbf{(Flip)}
   \label{def:flip}
   Let $r$ be a rectangulation of $P$ and let $p\in P$ be a point
   such that the segment $s$ that contains $p$ does not contain any endpoints
   of other segments. The operation {\it {\sc Flip}$(r,p)$} changes
   the orientation of $s$ from vertical to horizontal or vice-versa.
\end{definition}

\begin{definition}\textbf{(Rotate)}
   \label{def:rotate}
   Let $r$ be a rectangulation of $P$.
   Let $s_1=ab$ be a segment that contains $p\in P$. Let $s_2=cd$ be a segment
   such that $c$ lies in $ap\subset s_1$ and $ac$ does not contain any endpoints of other segments.
   The operation {\sc Rotate}$(r,c)$ shortens $s_1$ to $cb$, and extends $s_2$ beyond
   $c$ until it reaches another segment or the boundary of $R$.
\end{definition}

For a finite set of noncorectilinear points $P\subset R$, we denote by $G(P) = (V,E)$ the {\it graph of
rectangulations of $P$}, where the vertex set is $V = \{ r : r$ is a rectangulation of $P\}$
and the edge set is $E=\{(r_1,r_2) :$ a single flip or rotate operation on $r_1$
produces $r_2\}$. Since both operations are reversible, $G(P)$ is an undirected
graph. It is not hard to show that $G(P)$ is connected~\cite{ABP06}, and there is a sequence of $O(n^2)$ flip and rotate operations between any two rectangulations in $G(P)$ when $P$ is a set of $n$ points in $R$. It is natural to ask for the diameter of $G(P)$, which we call the {\it flip diameter} of $P$ for short. For every set of $n$ points in $R$, the flip diameter is at least $\Omega(n)$, since every point set admits a  rectangulation with all horizontal segments and one with all vertical segments, and each operation modifies at most two segments.

\subsection*{Results}

In this paper, we show that the flip diameter of $P$ is $O(n \log n)$
for every $n$-element noncorectilinear point set $P$ (Section~\ref{sec:rect}),
and it is $\Omega(n \log n)$ for some $n$-element noncorectilinear point sets
(Section~\ref{sec:construction}). However, there are $n$-elements noncorectilinear
points sets for which the flip diameter is $\Theta(n)$ (Section~\ref{sec:diagonal}).
That is, the flip diameter is always between $O(n \log n)$ and $\Omega(n)$,
depending on the point configuration, and both bounds are the best possible.

We extend the flip and rotate operations and the notion of flip diameter to convex
subdivisions (Section~\ref{sec:convex}). A {\it convex subdivision} of a set
$P\subset\mathbb{R}^2$ of points is a subdivision of the plane
into convex faces by pairwise noncrossing line segments, halflines, and lines,
each of which contains exactly one point of $P$. We show that the flip diameter
for the convex subdivisions of $n$ points is always $O(n \log n)$ and sometimes $\Theta(n)$.

\subsection*{Related work}

Determining the exact number of rectangulations on $n$ noncorectilinear points remains an
elusive open problem in enumerative combinatorics~\cite{ABP06,ABB+13}. Recently,
Felsner~\cite{Fel14} proved that every combinatorial floorplan with $n+1$ rooms
can be embedded into every set of $n$ noncorectilinear points, hence every set of $n$
noncorectilinear points has at least $B(n)=\Theta(8^n/n^4)$ rectangulations.

The currently best known upper bound, $O(18^n/n^4)$ by Ackerman~\cite{A06}, uses
the so-called ``cross-graph'' charging scheme~\cite{SS03,SW06}, originally developed
for counting the number of (geometric) triangulations on $n$ points in the plane.
This method is based on elementary ``flip'' operations that transform one
triangulation into another. Lawson~\cite{Law77} proved that every triangulation
on $n$ points in the plane in general position (i.e., no three on a line)
can be transformed into the Delaunay triangulation with $O(n^2)$ flips,
and this bound is the best possible by a construction due to Hurtado et al.~\cite{HNU99}.
However, for $n$ points in convex position, $2n{-}10$ flips are sufficient,
due to a bijection with binary trees with $n-2$ internal nodes~\cite{STT88}.
Hence the flip diameter of every triangulation on $n$ points in the plane is
always between $\Theta(n)$ and $\Theta(n^2)$ depending on the point configuration.
Eppstein et al.~\cite{EMS+12} and Buchin et al.~\cite{BEL+11} define two
elementary flip operations on floorplans, in terms of the directed dual graph,
and solve optimization problems on floorplans by traversing the flip graph.

Generalizing simultaneous edge flips in a triangulation, Meijer and Rappaport~\cite{MR04}
considered {\it simultaneous edge flips} in convex decompositions of a point set $P$.
A {\it convex decomposition} of $P$ is a subdivision of the convex hull of $P$ into convex polygons whose joint vertex set is $P$; and two convex decompositions of $P$ are related by a simultaneous edge flip if their union contains no crossing edges.

\section{An Upper Bound on the Flip Diameter of Rectangulations}
\label{sec:rect}

In this section, we show that for every set $P$ of $n$ noncorectilinear points in a rectangle $R$,
the diameter of $G(P)$ is $O(n \log n)$.

\begin{theorem}
   \label{thm:rect}
   For every noncorectilinear set $P$ of $n$ points in the plane, the diameter of the graph $G(P)$ is $O(n \log n)$.
\end{theorem}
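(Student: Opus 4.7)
My plan is to prove the theorem by exhibiting a canonical rectangulation $C(P)$ and showing that every rectangulation of $P$ can be transformed into $C(P)$ using $O(n \log n)$ flip and rotate operations; the diameter bound then follows from the triangle inequality and the reversibility of both operations.

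For the canonical form, I would use a balanced sliceable rectangulation defined recursively: let $p^*$ be the point of median $x$-coordinate in $P$; place a vertical segment through $p^*$ that extends all the way from the top to the bottom of $R$ (a \emph{full cut}), and then recursively form the canonical rectangulation on the $\lfloor (n-1)/2 \rfloor$ and $\lceil (n-1)/2 \rceil$ points on either side of the cut. Since the noncorectilinear assumption makes the median well defined, this recursion has depth $O(\log n)$ and always splits the point set evenly.

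The whole argument reduces to a key lemma: given any rectangulation $r$ of an $n$-point set $P \subset R$ and any distinguished point $q \in P$, one can transform $r$ into a rectangulation in which the segment through $q$ is a full vertical cut of $R$, using $O(n)$ flip and rotate operations. Granting this lemma, we apply it at the root with $q = p^*$ to create the top-level full cut in $O(n)$ operations, then recurse independently on the two sides (which are genuinely independent, since a full cut has no other segment endpoints on it), yielding the recurrence $T(n) \le 2\,T(n/2) + O(n) = O(n\log n)$.

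To prove the key lemma, I would first ensure the segment $s_q$ through $q$ is oriented vertically: if it is horizontal, clear any segment endpoints lying on $s_q$ by a bounded number of rotations per endpoint and then apply a single flip. Next, extend $s_q$ upward (and symmetrically downward) toward the boundary of $R$: at the current top endpoint of $s_q$ there is a horizontal segment $t$ that blocks further extension, forming a T-junction. Using the rotate operation with $s_1 = t$ and $s_2 = s_q$, one can shorten $t$ on the side that carries no other segment endpoints and extend $s_q$ past $t$; when both sides of $t$ carry endpoints, a few preparatory rotations on the offending horizontal neighbors of $t$ will clear one side. The main obstacle I anticipate is the amortized accounting: a naive bound could allow cascading preparatory rotations to blow up past $O(n)$. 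The way around this is a credit/charging scheme in which each rotation either advances the upper endpoint of $s_q$ past an obstacle (charged once to the obstacle, which then lies entirely on one side of the emerging full cut and will not be re-encountered) or permanently removes a T-junction from the strip still to be swept. Since the cut ultimately partitions the remaining $n-1$ segments into two sides, the total number of charged operations is $O(n)$, completing the lemma and hence the theorem.
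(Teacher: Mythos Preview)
Your divide-and-conquer scheme is natural, but the whole weight rests on the key lemma, and the charging you sketch does not give $O(n)$. Here is a configuration on which your procedure takes $\Theta(n^2)$ operations. Put $q$ and $m\approx n/2$ further points with $x$-coordinates smaller than $x(q)$ below a horizontal level; through each of these points draw a vertical segment from the bottom of $R$ up to a common horizontal segment $t_1$. Above $t_1$, stack horizontal segments $t_1,\dots,t_m$ spanning the full width of $R$, each through a point whose $x$-coordinate exceeds $x(q)$. Now $s_q$ is already vertical with its top on $t_1$. Since the point of $P$ on $t_1$ lies to the right of $s_q$, the \textsc{Rotate} definition forces you to clear the entire left portion of $t_1$ before you can extend $s_q$; that means $m$ rotations, each pushing one of the left vertical segments up so that its top now sits on $t_2$. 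After advancing $s_q$ to $t_2$ you are in exactly the same situation, and you repeat for each of the $m$ levels, for $\Theta(m^2)=\Theta(n^2)$ operations in total. Your proposed credit (``permanently removes a T-junction from the strip still to be swept'') does not apply: those clearing rotations do not remove T-junctions, they relocate them one level up, still inside the region you must sweep.

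It is not clear that the full-cut lemma holds with an $O(n)$ bound at all; if it only holds with $O(n\log n)$, your recursion yields $O(n\log^2 n)$. The paper avoids this difficulty by taking a different canonical form (all segments vertical) and a different linear step: it picks an independent set in the bar-visibility graph of the horizontal segments---a planar graph, so a quarter of the horizontals can be chosen---and runs \texttt{Shorten\&Flip} on each. Independence is precisely what kills the cascade in the example above: when a rotation extends a vertical segment, the new endpoint lands on a horizontal segment that is bar-visible from the one being processed and hence \emph{not} in the independent set, so no new work is created inside the current round. One round then costs $O(n)$ and eliminates a constant fraction of the horizontal segments, giving $O(n\log n)$ overall.
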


Given a rectangulation $r$ of $P$, we construct a sequence of $O(n \log n)$ operations that
transforms $r$ into a rectangulation with all segments vertical (a {\it canonical} rectangulation). Our method relies on the concept of independent sets, defined in terms of the bar visibility graph. Let $r$ be a rectangulation of $P$. The {\it bar visibility graph}~\cite{DHLM83,TT86,Wis85} on the horizontal segments of $r$ is defined as a graph $H(r)$, where the vertices correspond to the horizontal segments in $r$; and two horizontal segments $s_1$ and $s_2$ are adjacent in $H(r)$ if and only if there are points $a\in s_1$ and $b\in s_2$ such that $ab$ is a vertical segment (not necessarily in $r$) that does not intersect any other horizontal segment in $r$. It is clear that the bar visibility graph is planar.

Observe that we can always change the orientation of any line segment $s$ with $O(n)$ operations:
successively shorten $s$ using rotate operations until $s$ contains no other segment endpoints, and then flip $s$.
This simple procedure is formulated in the following subroutine.

\begin{quote}
   \texttt{Shorten\&Flip}$(r,s)$. Let $s$ be a segment in a rectangulation $r$. Assume $s=ab$ and $p\in P$ is in the relative interior of $s$. While $s$ contains the endpoint of some other segment, let $c_1\in s$ and $c_2\in s$ be the endpoints of some other segments closest to $a$ and $b$, respectively (possibly $c_1=c_2$). If $p\not\in ac_1$, then apply {\sc Rotate}$(r,c_1)$ to shorten $s=ab$ to $c_1b$. Else, apply {\sc Rotate}$(r,c_2)$ to shorten $s$ to $ac_2$. When $s$ does not contain the endpoint of any other segment, apply {\sc Flip}$(r,p)$.
\end{quote}

The proof of Theorem~\ref{thm:rect} follows from a repeated invocation of the following lemma.

\begin{lemma}
   \label{lem:lin}
   Let $r$ be a rectangulation of a set of $n$ pairwise noncorectilinear points in a rectangle $R$.
   There is a sequence of $O(n)$ flip and rotate operations that turns at least one quarter of the
   horizontal segments into vertical segments, and keeps vertical segments vertical.
\end{lemma}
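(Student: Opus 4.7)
My plan is to use the planarity of the bar visibility graph $H(r)$ to obtain a large independent set $I$ of horizontal segments, and then flip every segment of $I$ by running \texttt{Shorten\&Flip} on each while amortizing the total cost.

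Let $n_h$ denote the number of horizontal segments in $r$. Since $H(r)$ is planar, the Four Color Theorem yields an independent set $I$ of horizontal segments with $|I|\ge n_h/4$. I would process the elements of $I$ in arbitrary order, applying \texttt{Shorten\&Flip} to each. No operation in this procedure alters the orientation of a vertical segment: rotates only extend verticals, and flips are applied only to segments of $I$, which are horizontal. What remains is to bound the total number of operations by $O(n)$.

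The central claim is that whenever a rotate occurs while processing some $s'\in I$, the vertical segment being extended lands on a horizontal segment $h$ with $h\notin I$. Let $c\in s'$ be the point at which the rotate is performed, and let $h^\star$ be the horizontal segment of the \emph{original} rectangulation nearest to $s'$ at $x=c_x$ on the side into which the extension proceeds. Then the vertical segment from $(c_x,s'_y)$ to $(c_x,h^\star_y)$ certifies that $s'$ and $h^\star$ are adjacent in $H(r)$, so the independence of $I$ forces $h^\star\notin I$. Since non-$I$ horizontals are never shortened and no horizontal segments are ever created during the procedure, the nearest horizontal above (resp.\ below) $s'$ at $x=c_x$ is still $h^\star$ at the moment of the rotate, and the extension indeed lands on $h^\star\notin I$.

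For the amortization, let $k(s)$ be the number of endpoints in the relative interior of $s$ when the processing of $s$ begins, so that \texttt{Shorten\&Flip}$(r,s)$ performs exactly $k(s)+1$ operations. Each vertical segment has at most two endpoints on horizontals, so initially $\sum_{s\in I}k_0(s)\le 2n$. By the claim, rotates add no endpoints to $I$-segments; a flip can place at most two endpoints on horizontals, contributing at most $2|I|\le 2n$ extra endpoints to $I$-segments in total. Hence $\sum_{s\in I}k(s)=O(n)$, the total cost is $\sum_{s\in I}(k(s)+1)=O(n)$, and $|I|\ge n_h/4$ horizontals have been turned vertical. I expect the main obstacle to be the claim above: justifying that independence in the \emph{original} $H(r)$ controls extensions made in a progressively modified rectangulation, which relies on the fact that the arrangement of non-$I$ horizontals is monotone (only deletions of flipped $I$-segments and shortenings of the $I$-segment currently being processed occur during the procedure).
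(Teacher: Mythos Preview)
Your proposal is correct and follows essentially the same approach as the paper: both take an independent set $I$ in the planar bar visibility graph $H(r)$ via the Four Color Theorem, then apply \texttt{Shorten\&Flip} to each segment of $I$ and argue the total cost is $O(n)$. Your central claim---that a rotate performed while processing $s'\in I$ extends the vertical onto a horizontal $h^\star\notin I$ (or the boundary of $R$)---is exactly the content of the paper's invariant that the remaining horizontal segments of $I$ stay independent in the bar visibility graph throughout; you have simply written out the justification more explicitly, and your amortization (bounding $\sum_{s\in I}k(s)$ by the initial count plus a generous $2|I|$ allowance for flips) is a slightly looser but valid version of the paper's ``each rotate decreases the number of segment endpoints in the interior of some $I$-segment'' counting.
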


\begin{proof}
   By the four color theorem~\cite{RSST97}, $H(r)$ has an independent set $I$ that contains at least one quarter of the horizontal segments in $r$. The total number of endpoints of vertical segments that lie on some horizontal segment in $I$ is $O(n)$. Successively call the subroutine \texttt{Shorten\&Flip}$(r,s)$ for all horizontal segment $s\in I$.

   The horizontal segments in $I$ are shortened and flipped into vertical orientation. All operations maintain the invariants that (1) the segments in $I$ are pairwise disjoint, and (2) the remaining horizontal segments in $I$ form an independent set in the bar visibility graph (of all horizontal segments in the current rectangulation). It follows that each operation either decreases the number of horizontal segments in $I$ (flip), or decreases the number of segment endpoints that lie in the relative interior of a segment in $I$ (rotate). After $O(n)$ operations, all segments in $I$ become vertical. Since only the segments in $I$ are flipped (once each), all vertical segments in $r$ remain vertical, as required.
\end{proof}

\begin{proof}[of Theorem~\ref{thm:rect}.]
   Let $P$ be a set of $n$ pairwise noncorectilinear points in a rectangle $R$.
   Denote by $r_0$ the rectangulation that consists of $n$ vertical line segments,
   one passing through each point in $P$.

   We show that every rectangulation $r_1$ of $P$ can be transformed into $r_0$
   by a sequence of $O(n \log n)$ flip and rotate operations.
   By Lemma~\ref{lem:lin}, a sequence of $O(n)$ operations can decrease the
   number of horizontal segments by a factor of at least $4/3$. After at most
   $\log n/\log (4/3)$ invocations of Lemma~\ref{lem:lin}, the number of
   horizontal segments drops below~1, that is, all segments become vertical
   and we obtain $r_0$.
\end{proof}

\paragraph{Remark.} The proof of Theorem~\ref{thm:rect} is constructive, and provides an algorithm for transforming a rectangulation on $n$ noncorectilinear points to the rectangulation with $n$ vertical segments. If the rectangulations are maintained in a doubly connected edge list (DCEL) data structure, then a flip or rotate operation can be implemented in $O(1)$ time, and the bar-visibility graph can be computed in $O(n)$ time. The bottleneck of the overall running time is the 4-coloring of the bar-visibility graph. The current best algorithm for 4-coloring an $m$-vertex planar graph runs  in $O(m^2)$ time~\cite{RSST97}, and a repeated call to this algorithm to exponentially decaying bar-visibility graphs
takes $O(n^2)$ time. If we use an $O(m)$-time 5-coloring algorithm~\cite{CNS81,Fre84,Wil85}, up to $\log n/\log (5/4)$ times, the overall running time improves to $O(n)$.

\section{A Lower Bound on the Flip Diameter of Rectangulations}
\label{sec:construction}

We show that the diameter of the graph $G(P)$ is $\Omega(n \log n)$ when
$P$ is an $n$-element {\it bit-reversal point set} (alternatively,
Halton-Hammersley point set)~\cite[Section~2.2]{Cha00}.
For every integer $k\geq 0$, we define a point set $P_k$ of
size $n=2^k$ with integer coordinates lying in the square $R=[-1,n]^2$.
For an integer $m$, $0\leq m< 2^k$, with binary representation $m=\sum_{i=1}^k b_i 2^{i-1}$,
the bit-reversal gives $y(m)=\sum_{i=1}^k b_i 2^{k-i}$. The bit-reversal point set of size
$n=2^k$ is $P_k=\{ (m, y(m)): m=0,1,\ldots ,n-1\}$.
By construction, no two points in $P_k$ are corectilinear.

\begin{figure}[htbp]
   \centering
   \includegraphics[width=.7\columnwidth]{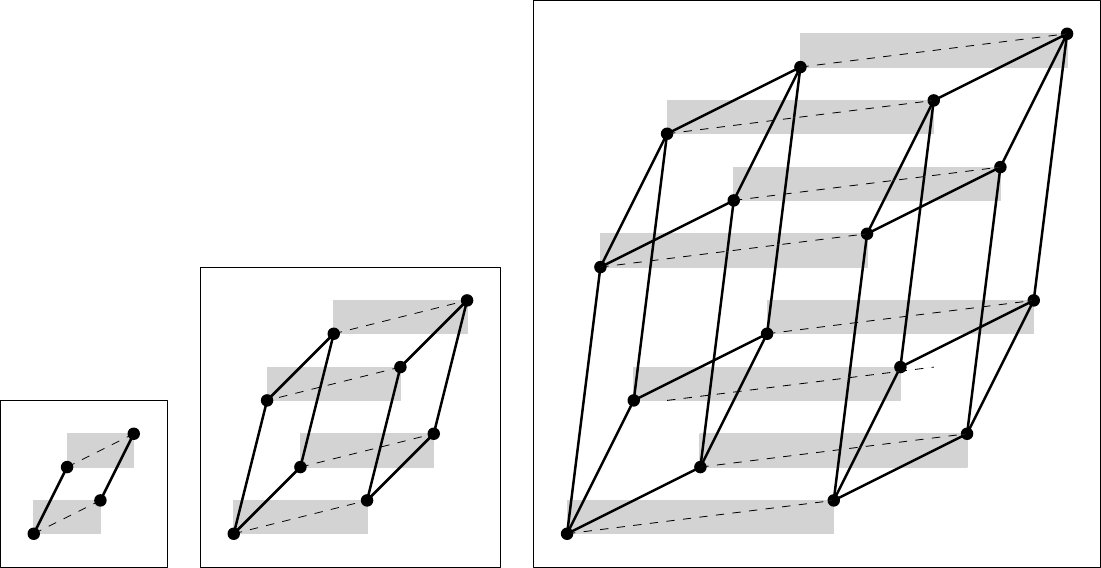}
   \caption{The sets $P_2$, $P_3$, and $P_4$. The edges connect pairs of points
   whose binary representations differ in a single bit, showing that
   $P_k$ is a projection of a $k$-dimensional hypercube.
   The grey rectangles are spanned by point pairs whose binary
   representations differ exactly in the last coordinate.}
   \label{fig:HH1}
\end{figure}

We establish a lower bound of $k2^{k-3}$ for the diameter of $G(P_k)$ using a charging
scheme. We define $k2^{k-1}$ empty rectangles (called {\it boxes}) spanned by $P_k$,
and charge one unit for ``saturating'' a box with vertical segments (as defined below).
We show that when a rectangulation with all horizontal segments is transformed into one with
all segments vertical, each box becomes saturated. We also show that each rotate (resp., flip)
operation contributes a total of at most~2 (resp.,~4) units to the saturation of various
boxes in our set. It follows that at least $(k2^{k-1})/4=k2^{k-3}=n \log n/8$ operations
are required to saturate all $k2^{k-1}$ boxes.

Consider the point set $P_k$ for some $k\in \mathbb{N}$. We say that a rectangle $B\subset [-1,n]^2$ is {\it spanned} by $P_k$ if two opposite corners of $B$ are in $P_k$; and $B$ is
{\it empty} if its interior is disjoint from $P_k$.

Let $\B$ be the set of closed rectangular boxes spanned by point pairs in $P_k$ in which
the binary representation of the $x$-coordinates $(b_1, \ldots , b_k)$ differ in precisely one bit.
See Figure~\ref{fig:HH1} for examples.
Each point in $P_k$ is incident to $k$ boxes in $\B$, since there are $k$ bits.
Every box in $\B$ is spanned by two points of $P_k$, thus
$|\B|=k\cdot |P_k|/2=k2^{k-1}$.

Each point is incident to $k$ boxes of sizes
$2^{i-1} \times 2^{k-i}$ for $i=0,\ldots , k-1$,
since changing the $i$th bit $b_i$ incurs an $2^{i-1}$ change
in the $x$-coordinate and an $2^{k-i}$ change in the $y$-coordinate.
However, if we change several bits successively, then either the
$x$-coordinate changes by more than $2^{i-1}$ or the $y$-coordinate
changes by more than $2^{k-i}$, and so every box in $\B$ is empty.
Note also that the boxes of the same size are pairwise disjoint.

We now define the ``saturation'' of each box $B\in \B$ with respect to a rectangulation of $P_k$.
Let $B\in \B$ and let $r$ be a rectangulation of $P_k$. The vertical {\it extent} of $B$ is the orthogonal projection of $B$ into the $y$-axis. Consider the vertical segments of $r$ clipped in $B$ (i.e., the segments $s\cap B$ for all vertical segments $s$ in $r$). The {\it saturation of $B$}
with respect to $r$ is the percentage of the vertical extent of $B$ covered by
projections of vertical segments of $r$ clipped in $B$. See Figure~\ref{fig:HH2}
\begin{figure}[htbp]
   \centering
   \includegraphics[width=.26\columnwidth]{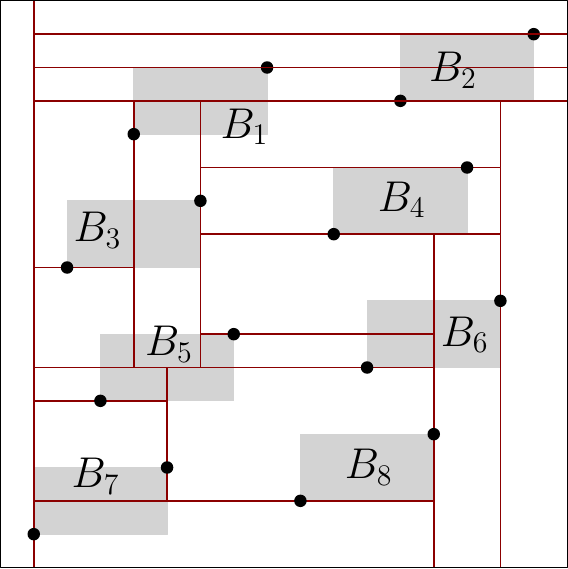}
   \caption{A rectangulation of $P_4$. The saturation of box
            $B_1, \ldots ,B_8$ is $\frac{1}{2}$, 0, 1, 0, 1, 1, 1, and 1,
            respectively.}
   \label{fig:HH2}
\end{figure}
for examples.
By definition, the saturation of $B$ is a real number in $[0,1]$. For every $B\in \B$,
we have that the saturation of $B$ is~0 when $r$ is a rectangulation with all horizontal segments, and it is~1 when $r$ consists of all segments vertical. If we transform an
all-horizontal rectangulation into an all-vertical one by a sequence of operations,
the total saturation of all $k2^{k-1}$ boxes in $\B$ increases from $0$ to $k2^{k-1}$.
The key observation is that a single operation increases
the total saturation of all boxes in $\B$ by at most a constant.

It remains to bound the impact of a single operation on the saturation of a box in $\B$. Consider first an operation {\sc Rotate}$(r,c)$ that increases the saturation of some box $B\in \B$. A rotate operation shortens a segment $s_1$ and extends an orthogonal segment $s_2$. The saturation of a box $B$ can increase only if a vertical segment grows, so we may assume that $s_1$ is horizontal and $s_2$ is vertical. Denote by $s$ the newly inserted portion of $s_2$. Note that $s$ lies in a single face of the rectangulation $r$. Similarly, if an operation {\sc Flip}$(r,p)$ increases the saturation of a box in $B$, then it replaces a horizontal segment by a vertical segment passing through $p$.
The new vertical segment lies in {\it two} adjacent faces of $r$, separated by the original horizontal segment through $p$. We represent the new vertical segment as the union of two collinear vertical segments $s\cup s'$ that meet at point $p$. In summary, an operation {\sc Rotate}$(r,c)$ inserts one vertical segment $s$ that lies in the interior of a face of $r$, and an operation {\sc Flip}$(r,p)$ inserts two such segments. We show now that if such a new vertical segment $s$ increases the saturation of some box in $B\in \B$, then $s$ must lie in $B$.

\begin{lemma}
   \label{lem:ce}
   Suppose that an operation inserts a vertical segment $s$ that lies in
   a face $f$ of the rectangulation $r$. If the insertion of $s$ increases
   the saturation of a box $B\in \B$, then $s\subset B$.
\end{lemma}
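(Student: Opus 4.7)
Plan: My plan is to argue by contradiction. Since the inserted $s$ is a vertical chord of its containing face $f$ (running from the top to the bottom horizontal side of $f$), the $y$-range of $s$ equals that of $f$, say $[y_B,y_T]$, and the $x$-coordinate $x_s$ of $s$ lies strictly between the two vertical sides of $f$, $x_L$ and $x_R$. Writing $B=[a,b]\times[c,d]$, if $s$ contributes positively to the saturation of $B$ then $x_s\in[a,b]$ and $[y_B,y_T]\cap[c,d]$ has positive length, so $s\subset B$ is equivalent to $[y_B,y_T]\subset[c,d]$. Suppose not; by symmetry, assume $y_T>d$, so that $y_B<d<y_T$ and $d$ lies strictly in the interior of $f$'s $y$-range.

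The key observation driving the argument is that the open segment $(x_L,x_R)\times\{d\}$ lies in the interior of $f$, so it contains no segments of $r$ and no points of $P_k$. Let $p^T\in P_k$ be the corner of $B$ at height $d$, say $p^T=(x^T,d)$ with $x^T\in\{a,b\}$. Because $p^T$ lies in the relative interior of a segment of $r$, it cannot sit in the interior of $f$, forcing $x^T\notin(x_L,x_R)$. Combined with $x_s\in(x_L,x_R)\cap[a,b]$, this yields either $a\le x_L$ when $x^T=a$ or $b\ge x_R$ when $x^T=b$; I will assume the former.

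From here I would analyze the orientation of the segment $t$ of $r$ containing $p^T$. If $t$ is horizontal, then $t$ lies on $\{y=d\}$, and avoiding the interior of $f$ forces its right endpoint to be at some $q=(x',d)$ with $x'\in(a,x_L]$---a point in the interior of the top side of $B$. Since $q$ is the endpoint of a horizontal segment, it lies on some vertical segment $v$ at $x=x'$; the unique $P_k$-point on $v$ has $x$-coordinate $x'\in(a,b)$, and by emptiness of $B$ together with noncorectilinearity, its $y$-coordinate lies strictly outside $[c,d]$. Tracing the vertical extent of $v$ and how it interacts with the boundary of $f$ then yields a contradiction---either $v$ or an adjacent segment is forced into the interior of $f$, or a further $P_k$-point is forced into $B$. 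The case $t$ vertical is parallel: if $t$ coincides with the left boundary of $f$ (the case $a=x_L$) then $p^T$ is the unique $P_k$-point on that boundary, which is incompatible with $f$'s extension above $d$ once one analyzes the top-left corner of $f$; and if $a<x_L$, then the distinct boundary segment of $f$ at $x=x_L$ carries a $P_k$-point whose $y$-coordinate, again outside $[c,d]$, gives a similar contradiction.

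The main obstacle will be this terminal case analysis---patiently checking that every possible configuration of segments near $p^T$ and on the boundary of $f$ is excluded by combining the emptiness of $B$, the noncorectilinearity of $P_k$, and the non-crossing structure of rectangulations.
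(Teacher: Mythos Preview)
Your setup and the first reduction are correct and match the paper: you correctly argue that the top corner $p^T=(a,d)$ of $B$ cannot lie in the interior of $f$, and hence $a\le x_L$, so the left side of $f$ sits at some $x_L\in[a,x_s)\subset[a,b]$. But from this point on you take a wrong turn. You launch a case analysis on the segment through $p^T$, hoping to force a structural impossibility (a point of $P_k$ in the interior of $B$, or a segment intruding into $\mathrm{int}(f)$). No such impossibility need occur: configurations with $s\not\subset B$ and your derived inequalities are perfectly consistent as rectangulations. The contradiction is not with the structure of $r$; it is with the hypothesis that inserting $s$ \emph{increases} the saturation of $B$.

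The missing observation is this. Once you know $x_L\in[a,b]$, look at the left side of the face $f$ itself. It is a vertical edge at $x=x_L$ spanning exactly $[y_B,y_T]$, and (since $x_L\ge a\ge 0>-1$) it lies on a vertical segment of $r$, not on $\partial R$. Because $s$ is a chord of $f$, it has the very same $y$-extent $[y_B,y_T]$. Clipping both to $B$, the left side of $f$ already covers the interval $[\max(y_B,c),\min(y_T,d)]$ of the vertical extent of $B$, which is precisely what $s$ would contribute. Hence the saturation of $B$ does not change when $s$ is inserted, contradicting the hypothesis. That single sentence replaces your entire terminal case analysis; the branches you sketch (horizontal vs.\ vertical $t$, $a=x_L$ vs.\ $a<x_L$, tracing an auxiliary vertical $v$) are unnecessary and, as written, do not close.
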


\begin{proof}
   Suppose, to the contrary, that $s\not\subset B$.
   Let $p,q\in P_k$ denote the two opposite corners of points that span $B$, such that $p$
   is the upper left or upper right corner of $B$, and $q$ is the opposite corner of $B$.
   Since $s$ increases the saturation of $B$, it must intersect $B$.
   Hence $f\cap B\neq \emptyset$. Since $s\not\subset B$, at least one of the endpoints of $s$
   lies in the exterior of $B$. Assume, without loss of generality, that the upper endpoint of
   $s$ lies outside $B$, and $p$ is the upper left corner of $B$. Then, the top side of $f$ is
   strictly above the top side of $B$. Since point $p$ cannot be in the interior of $f$,
   the left side of the face $f$ intersects the top side of $B$. Note that $s$ and the left
   side of $f$ have the same orthogonal projection on the $y$-axis. Therefore, the insertion
   of $s$ cannot increase the saturation of $B$, contradicting our assumption.
   We conclude that both endpoints of $s$ lie in $B$, and $s\subset B$.
\end{proof}

\begin{lemma}
   \label{lem:constant}
   A rotate (resp., flip) operation increases the total saturation of all boxes in $\B$ by at most~2 (resp.,~4).
\end{lemma}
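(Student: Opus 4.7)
My plan is to bound, by a constant, the saturation gain contributed by a single newly inserted vertical segment, then multiply by the number of such segments the operation introduces ($1$ for a rotate, $2$ for a flip, as identified in the discussion preceding Lemma~\ref{lem:ce}). So let $s$ be one inserted vertical segment of length $\ell$. By Lemma~\ref{lem:ce}, $s$ contributes to the saturation of a box $B\in\B$ only if $s\subset B$.

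The key step is to exploit the geometry of $\B$. For each bit index $i\in\{1,\ldots,k\}$, the boxes of $\B$ coming from pairs that differ in bit $i$ all have dimensions $2^{i-1}\times 2^{k-i}$ and are pairwise disjoint; hence $s$ lies in at most one box of $\B$ per bit index. I list the boxes of $\B$ containing $s$ as $B_{i_1},\ldots,B_{i_m}$ with $i_1<\cdots<i_m$ and use two simple facts: (i)~the saturation gain of $B_{i_j}$ is at most $\ell/2^{k-i_j}$, since its vertical extent has length $2^{k-i_j}$; and (ii)~because $s\subset B_{i_m}$, the box of smallest vertical extent among the $B_{i_j}$'s, we have $\ell\le 2^{k-i_m}$.

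These two facts collapse the total saturation gain from $s$ into a geometric series:
\[
\sum_{j=1}^{m}\frac{\ell}{2^{k-i_j}} \;\le\; \sum_{j=1}^{m} 2^{i_j-i_m} \;\le\; \sum_{j=1}^{m} 2^{j-m} \;<\; 2,
\]
where the middle inequality uses $i_j\le i_m-(m-j)$ for distinct integers $i_1<\cdots<i_m$. One application of this estimate settles the rotate case, and two applications (one per collinear half $s$, $s'$ of the flipped segment) settle the flip case.

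The main subtlety I anticipate is in the flip case: the two halves $s$ and $s'$ must be handled independently. Since $s$ and $s'$ lie in different faces of the pre-flip rectangulation (separated by the original horizontal segment through the flipped point $p$), Lemma~\ref{lem:ce} applies to each half in its own face without interaction, and the individual bounds simply add to yield the claimed total of at most~$4$ for a flip.
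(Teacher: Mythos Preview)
Your proof is correct and follows essentially the same approach as the paper's: both reduce to a single inserted vertical segment $s$, invoke Lemma~\ref{lem:ce} to conclude $s\subset B$ for any affected box, use the disjointness of same-size boxes to get at most one box per height class, and then sum the resulting geometric series to bound the contribution by~$2$. The only cosmetic difference is that the paper sums over all height classes $i=1,\ldots,j$ with $2^{k-i}\ge |s|$, whereas you sum only over the indices $i_1<\cdots<i_m$ of boxes actually containing $s$; your extra inequality $i_j\le i_m-(m-j)$ simply recovers the same geometric tail.
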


\begin{proof}
   Suppose that an operation {\sc Rotate}$(r,c)$ inserts a vertical segment $s$,
   or an operation {\sc Flip}$(r,p)$ inserts two collinear vertical segments
   $s\cup s'$ that meet at $p$. By Lemma~\ref{lem:ce}, the insertion of $s$
   increases the saturation of a box $B\in \B$ of height $h$ by $|s|/h$ if
   $h\geq |s|$, and does not affect the saturation of boxes of height $h<|s|$.
   Recall that the boxes in $\B$ have only $k$ different sizes, $2^{i-1} \times 2^{k-i}$
   for $i=1,\ldots , k$, and the boxes of the same size are pairwise disjoint.
   Let $j\in \{1,2,\ldots , k\}$ be the largest index such that $|s|\leq 2^{k-j}$.
   For $i=1,\ldots , j$, segment $s$ increases the saturation of at most one box
   of height $h=2^{k-i}$, and the increase is at most $|s|/h = |s|\cdot 2^{i-k}$.
   So $s$ increases the total saturation of all boxes in $\B$ by at most
   $\sum_{i=1}^{j} |s|2^{i-k}\leq \sum_{i=1}^{j} 2^{i-j}< 2$, as required.
\end{proof}

\begin{theorem}
   \label{thm:constuction}
   For every $n\in \mathbb{N}$, there is an $n$-element point
   set $P\subset [-1,n]^2$ such that the diameter of $G(P)$ is
   $\Omega(n \log n)$.
\end{theorem}

\begin{proof}
   First assume that $n=2^k$ for some $k\in \mathbb{N}_0$. We have defined a set $P_k$ of $n=2^k$ points and a set $\B$ of $k2^{k-1}=n \log n/2$ boxes spanned by $P_k$. The total saturation of all boxes in $\B$ is~0 in the rectangulation with horizontal segments, and $|\B|=n \log n/2$ in the one with all segments vertical. By Lemma~\ref{lem:constant}, a single flip or rotate operation increases the total saturation by at most~4. Therefore, at least $n \log n/8$ operations are required to transform the horizontal rectangulation to the vertical one, and the diameter of $G(P_k)$ is at least $n \log n/8$.

   If $n$ is not a power of two, then put $k=\lfloor \log_2 n\rfloor$ and let $P\subset [-1,n]^2$ be the union of $P_k$ and $n{-}2^k$ arbitrary (noncorectilinear) points in $[2^k,n]^2$. All axis-parallel segments containing the points in $P\setminus P_k$ are in the exterior of $[-1,2^k]^2$. Therefore $k2^{k-3}=\Omega(n \log n)$ operations are required when all segments containing the points in $P_k\subset P$ change from horizontal to vertical.
\end{proof}

\section{The Flip Diameter for Diagonal Point Sets}
\label{sec:diagonal}

We say that a point set $P$ is {\it diagonal} if all points in $P$ lie on
the graph of a strictly increasing function (e.g., $f(x)=x$).
In this section we show that the flip diameter is $O(n)$ for any $n$-element
diagonal set. We denote by $\seg{p_i}$ the segment that contains $p_i$.

\begin{theorem}
   \label{thm:diagonal}
   For every $n\in \mathbb{N}$, the diameter of $G(P)$ is at most
   $11n$ when $P$ is a diagonal set of $n$ points.
\end{theorem}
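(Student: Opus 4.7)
The plan is to fix the canonical all-vertical rectangulation $r_0$ -- in which every $p_i$ carries a vertical segment spanning the full height of $R$ -- and to show that $d(r,r_0)\le 6n$ for every rectangulation $r$ of a diagonal $n$-point set. The theorem then follows from the triangle inequality: $d(r_1,r_2)\le d(r_1,r_0)+d(r_0,r_2)\le 12n$.

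The key structural input, forced by the diagonal hypothesis, is a dichotomy on T-junctions. Let $s=ab$ be a horizontal segment of $r$ through $p_i$ and let $c$ be an endpoint of another (necessarily vertical) segment $s'$ lying in the relative interior of $s$. Then $s'$ contains a point $p_j$ with $x_j=x_c$. If $c$ lies strictly between $a$ and $p_i$, so $x_j<x_i$, the diagonal condition gives $y_j<y_i$; hence $s'$ lies entirely below $s$ with $c$ as its top endpoint. Symmetrically, if $c$ lies between $p_i$ and $b$, then $s'$ lies entirely above $s$ with $c$ as its bottom endpoint. In particular, every rotate that clears a T-junction on the left half of $s$ extends a vertical segment strictly upward past the level of $s$, and every rotate on the right half extends one strictly downward.

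Using this dichotomy I would transform $r$ into $r_0$ by processing the horizontal segments in an order driven by the bar-visibility relation on them. At each step I pick an \emph{extremal} horizontal $s$: one above which no remaining horizontal lies in the same $x$-range, or one below which none does, and clear it using rotations on the matching side of $p_i$ (left-of-$p_i$ rotations for a top-extremal choice, right-of-$p_i$ for a bottom-extremal one) before flipping $s$ to vertical. By the dichotomy, each such rotation extends a vertical segment into a region whose horizontal content has already been flipped away, so the extended vertical lands either on the boundary of $R$ or on a segment that is already vertical. In particular, no rotation ever creates a new T-junction on a horizontal that has not yet been processed.

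With this invariant in place the accounting is immediate: there are at most $n$ horizontal segments of $r$ to flip, and the total number of rotations is at most the number of T-junctions in $r$ of the form ``endpoint of a vertical segment on the interior of a horizontal segment'', which is at most $2n$ since each of the at most $n$ vertical segments has two endpoints. Hence $d(r,r_0)\le 3n$, comfortably within the $6n$ target. The main obstacle is justifying that an extremal horizontal segment always exists at every stage, and that top-extremal and bottom-extremal choices can be interleaved coherently so that the invariant ``no new T-junction on a remaining horizontal'' really holds throughout; this is where the diagonal condition is essential, because without it verticals could attach to a horizontal from either side, rotations could create T-junctions in either direction, and the bookkeeping would collapse.
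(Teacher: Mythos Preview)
Your dichotomy is correct and is the right structural consequence of the diagonal hypothesis: on a horizontal $\seg{p_i}$, every T-junction left of $p_i$ is the top of a vertical (a rotate there extends it upward) and every T-junction right of $p_i$ is the bottom of a vertical (a rotate extends it downward). The gap is exactly the obstacle you name at the end, and it is real rather than cosmetic. To flip $s=\seg{p_i}$ you must clear \emph{both} sides of $p_i$. If $s$ is top-extremal, the left-side rotations are safe as you say, but the right-side T-junctions are still there; clearing them pushes verticals \emph{downward}, and nothing about top-extremality prevents those extensions from landing on an unprocessed horizontal below $s$. So either $s$ cannot yet be flipped, or the invariant ``no new T-junction on a remaining horizontal'' fails and the $2n$ rotation count is lost. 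Requiring $s$ to be simultaneously top- and bottom-extremal is too strong (no such $s$ need exist among the remaining horizontals), and deferring one half of each segment while interleaving top- and bottom-extremal choices never lets you actually flip anything. The dichotomy tells you which way each rotation goes, but it does not supply a single processing order that is safe on both halves of a horizontal at the moment you need to flip it.

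The paper sidesteps this by not flipping horizontals one at a time against a bar-visibility order. It first spends fewer than $3n$ operations to eliminate any three consecutive parallel segments, then at most $n$ rotations to build a monotone \emph{staircase} from the lower-left to the upper-right corner of $R$ that skips at most one point in a row. The staircase separates $R$ into an upper and a lower region that are swept independently; in those sweeps every operation strictly increases the number of segment endpoints on the top or bottom side of $R$, which bounds the remaining work by $2n$. The staircase is the device that replaces your extremality argument: it provides a global separator so that the ``up'' and ``down'' directions forced by your dichotomy are handled in disjoint regions rather than on the two halves of the same segment.
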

\begin{proof}[of Theorem~\ref{thm:diagonal}]
   Without loss of generality, we may assume that the diagonal set is
   $P=\{p_i: i=1,\ldots,n\}$, where $p_i=(i,i)$.
   Given a rectangulation $r$ for $P$,
   we construct a sequence of at most $5.5n$ flip and rotate operations
   that transforms $r$ into a rectangulation with all segments vertical.
   Our algorithm consists of the following four phases.

   \smallskip\noindent
   \textbf{Phase~1: No three consecutive segments are parallel.}
   We describe an algorithm that, given a rectangulation $r$ of $P$,
   applies less than $3n$ operations and returns a rectangulation
   in which no three consecutive points are contained in parallel
   segments.

   \begin{quote}
      \texttt{NoThreeParallel}$(r)$.
      While there are three consecutive points in $P$ contained in parallel segments,
      let $p_{i-1},p_i,p_{i+1}\in P$ be arbitrary points such that the
      segments $\seg{p_{i-1}}$, $\seg{p_i}$, and $\seg{p_{i+1}}$ are parallel,
      and call \texttt{Shorten\&Flip}$(r,\seg{p_i})$.
   \end{quote}

   Note that \texttt{Shorten\&Flip}$(r,\seg{p_i})$ changes the orientation of the middle segment $\seg{p_i}$ only. Hence, after one invocation of \texttt{Shorten\&Flip}$(r,\seg{p_i})$, segments $\seg{p_{i-1}}$, $\seg{p_i}$, and $\seg{p_{i+1}}$ have alternating orientations, and will never be in the middle of a consecutive parallel triple. This implies that \texttt{NoThreeParallel}$(r)$ terminates after changing the orientation of at most $n/2$ segments, and no three consecutive points are in parallel segments in the resulting rectangulation.

   It remains to bound the number of flip and rotation operations: we have already seen that there are at most $n/2$ flip operations, and we shall charge each rotation operation to the segment extended by that operation. Note that after an invocation of \texttt{Shorten\&Flip}$(r,\seg{p_i})$, the two endpoints of $\seg{p_i}$ lie on $\seg{p_{i-1}}$ and $\seg{p_{i+1}}$, respectively, and so segment $\seg{p_i}$ is not extended after a flip in Phase~1. Thus, each segment can be extended in two possible directions (left/right or up/down).
   We claim that each segment is extended at most once in each direction. Suppose to the contrary that a segment $\seg{p_k}$ is extended twice in the same direction (refer to Figure~\ref{fig:no3},
   \begin{figure}[htbp]
      \centering
      \includegraphics[width=.7\columnwidth]{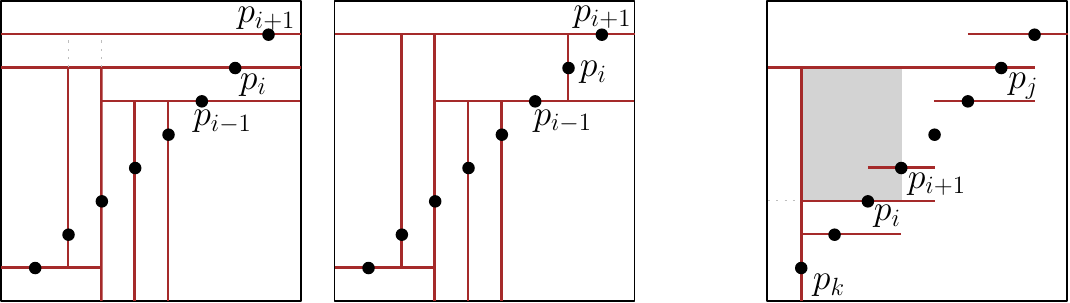}
      \caption{Left-Middle: If the segment through $p_{i-1}$, $p_i$, and
               $p_{i+1}$ are parallel, we perform
               \texttt{Shorten\&Flip}$(r_0,\seg{p_i})$.
               Right: If a rotation extends a vertical segment $\seg{p_k}$
               up from $\seg{p_i}$ to $\seg{p_j}$, then $\seg{p_{i+1}}$
               cannot be horizontal.}
      \label{fig:no3}
   \end{figure}
   right). By symmetry, we may assume that
   $\seg{p_k}$ is vertical, and it is extended upward in two rotate operations involving some horizontal segments $\seg{p_i}$ and $\seg{p_j}$ (each of which is sandwiched between two horizontal segments). In the first rotation, $\seg{p_k}$ is extended from its intersection with $\seg{p_i}$ to an intersection with $\seg{p_j}$. Hence, the order of the corresponding three points is $k<i<j$. This step creates a rectangle bounded by $\seg{p_i}$ on the top, $\seg{p_j}$ on the bottom, and $\seg{p_k}$ on the left. The right boundary of this rectangle cannot be $\seg{p_{i+1}}$, since it is horizontal; and it cannot be any segment $\seg{p_\ell}$, $\ell>i+1$, otherwise the rectangle would contain $p_{i+1}$. The contradiction confirms the claim. We conclude that this phase involves at most $2n$ rotations and at most $n/2$ flip operations.

   \smallskip\noindent
   \textbf{Phase~2: Creating a staircase.}
   We define a {\it staircase} in a rectangulation of $P$
   as a monotone increasing path along the segments that
   does not skip two or more consecutive points (refer to
   Figure~\ref{fig:staircase}).
   \begin{figure}[htbp]
      \centering
      \includegraphics[width=.42\columnwidth]{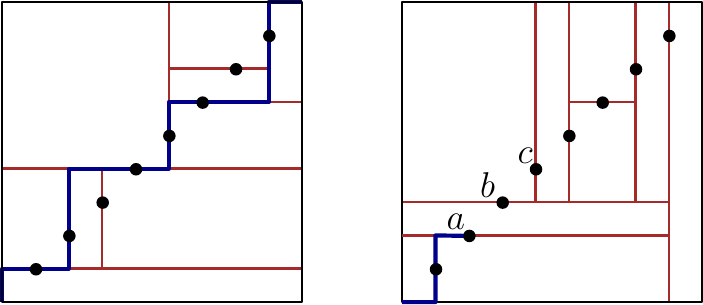}
      \caption{The left figure shows a rectangulation with a valid
               staircase, the right figure shows a rectangulation
               which does not have a valid staircase.}
      \label{fig:staircase}
   \end{figure}
   In the second phase of our algorithm,
   we transform a rectangulation $r$ with no three consecutive
   parallel segments into a rectangulation $r$ that contains a
   staircase from the lower left to the upper right corner of $R$.

   We construct a staircase $\pi$ incrementally, starting from the lower left corner.
   In each step we successively extend $\pi$ with one horizontal and one vertical
   segment to a point of $P$ or to the upper right corner of $R$. Let $\pi$ be a
   current staircase from the lower left corner to a point $a\in P$. Denote by
   $b$ and $c$ the next two points of the diagonal set $P$. If there is a monotone
   path from $a$ to $b$ or from $a$ to $c$, then append this path to $\pi$ to obtain
   a longer staircase.  Suppose there is no such path. Then the segments $\seg{a}$ and
   $\seg{b}$  must be parallel, otherwise there would be a monotone path from $a$ to $b$.
   The segment $\seg{c}$ must be perpendicular to the segment $\seg{a}$ since
   there are no three consecutive parallel segments.

   We would like to perform a rotation at the intersection of segments $\seg{b}$ and $\seg{c}$,
   and then we can extend $\pi$ from $a$ to $c$ (along $\seg{a}$ and $\seg{c}$).
   However, other operations may be necessary before we can rotate at $\seg{b}\cap \seg{c}$.
   Successively rotate all intersection points of $\seg{b}$ with other segments that
   are to the right of $b$ if $\seg{b}$ is horizontal, and that are above $b$ if $\seg{b}$
   is vertical (rotating at $\seg{b} \cap \seg{c}$ eventually).

   We claim that this phase requires at most $n$ rotations. Let $a$, $b$, and $c$ be
   defined as before, such that there is no path from $a$ to either $b$ or $c$.
   Assume, by symmetry, that $a$ and $b$ are horizontal and $c$ is vertical.
   Let $d_1, \ldots , d_k$ be segment endpoints on $\seg{b}$ to the right of $b$.
   The bottom endpoints of the vertical segments through $d_1 \hdots d_k$
   are now below $b$, and so these endpoints will never be involved in another
   rotation in our incremental procedure. Similarly, we perform a rotation at the
   left endpoint of each horizontal segment at most once.  Therefore, the total
   number of rotations is bounded by the number of segments, which is $n$.

   \smallskip\noindent
   \textbf{Phase~3: Extending all vertical segments of the staircase to full length.}
   The third phase of our algorithm returns a rectangulation in which every vertical
   segment of the staircase extends to the top and bottom sides of the bounding box $R$.
   The staircase subdivides $R$ into two regions, one above and one below
   (see Figure~\ref{fig:more_vert}).
   \begin{figure}[htbp]
      \centering
      \includegraphics[width=.63\columnwidth]{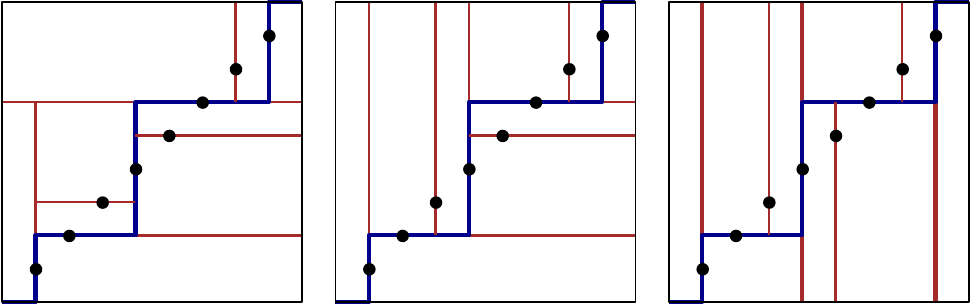}
      \caption{Left: a rectangulation with a valid staircase.
               Middle: the result of sweeping the region above the staircase.
               Right: the result of sweeping the regions on both sides of the
               staircase.}
      \label{fig:more_vert}
   \end{figure}
   We handle the two regions independently without modifying the staircase.
   To handle the region above, we sweep from top to bottom and successively shorten
   each horizontal segment we encounter until further shortening is impossible or
   would modify the staircase. If we encounter a horizontal segment $s$ that is not
   part of the staircase, we use subroutine \texttt{Shorten\&Flip}$(r,s)$ to shorten
   it as much as possible and then flip it to vertical position.
   Similarly, we sweep the region below the staircase bottom-up.

   Each flip or rotate operation in Phase~3 increases the number of segment
   endpoints lying on the top or bottom side of $R$. Therefore, the number
   of operations is bounded by the total number of segment endpoints, which is $2n$.

   \smallskip\noindent
   \textbf{Phase~4: Final steps.}
   The vertical segments of the staircase extend to the top and bottom of the
   bounding box and partition it into vertical strips (Figure~\ref{fig:more_vert}, right).
   Each strip contains a horizontal segment of the staircase, including a point in $P$.
   Because the staircase does not skip two or more consecutive points,
   there are at most three points in each strip. The total number of points
   in all strips combined is less than $n$.

   If there is just one point in a strip, then we flip the horizontal segment that contains it.
   If there are two points in a strip, then one lies on a horizontal and one on a vertical segment.
   In this case, perform a rotation so the vertical segment reaches maximum length, and then
   flip the horizontal segment. If there are three points in a strip, then the middle point
   lies on a horizontal segment and the other two lie on vertical segments above and below
   the staircase, respectively. In this case, we perform two rotations so that both vertical
   segments reach full length, and then flip the horizontal segment.
   Similarly to Phase~3, each operation in this phase increases the number of segment
   endpoints lying on the top or bottom side of $R$. Therefore, $2n$ is an upper
   bound on the number of operations in Phases~3 and~4 combined.
\end{proof}

\paragraph{Remark.}
Felsner et al.~\cite{FFN+11} showed that the rectangulations of a diagonal point set are in bijection
with {\it twin pairs of binary trees}. Visually, the part of the boundaries of the rectangles lying above
(resp., below) the points in $P$ form a binary tree whose vertices are the points in $P$ and the top-right
(resp., bottom-left) corners of $R$. This observation allows determining the number of
rectangulations for diagonal point sets in terms of Baxter permutations (cf.~\cite{FFN+11}).
Theorem~\ref{thm:diagonal} implies the following.
\begin{corollary}
For any two pairs of twin binary trees with $n$ leaves, there is a sequence of twin binary trees $(t_1,t'_1),\ldots ,(t_k,t'_k)$ of size $k=O(n)$ that starts with the first pair, ends with the second pair, and for every $i$ ($1\leq  i < k$), either $t_i=t_{i+1}$ or $t_{i+1}$ is obtained from $t_i$ by a single tree-rotate operation (the same for $t'_i$ and $t'_{i+1}$).
\end{corollary}

\section{Generalization to Convex Subdivisions}
\label{sec:convex}

Given a set $P$ of $n$ points in the plane, 
a {\it convex subdivision} for $P$ is a subdivision of the plane
into convex cells by $n$ pairwise noncrossing line segments
(possibly lines or half-lines), such that each segment contains
exactly one point of $P$ and no three segments have a point in common.

The flip and rotate operations can be interpreted for convex subdivisions of a point set $P$, as well (see Figure~\ref{fig:operations2}).
\begin{figure}[htbp]
   \centering
   \includegraphics[width=0.7\columnwidth]{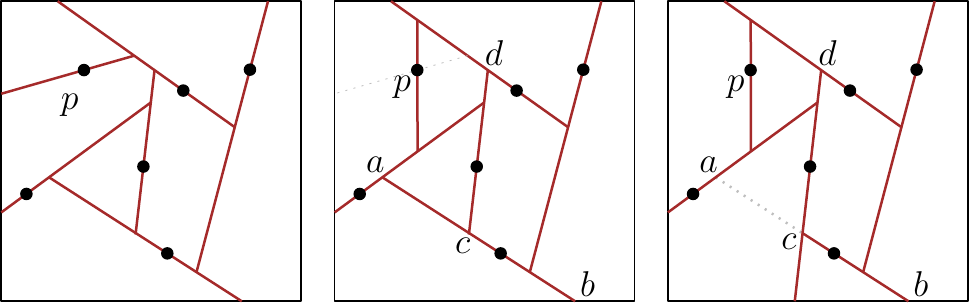}
   \caption{A convex subdivision $r_1$ of~6 points,
            $r_2=\mbox{{\sc Flip}}(r_1,p,\sigma)$,
            and $r_3=\mbox{{\sc Rotate}}(r_2,c)$.}
   \label{fig:operations2}
\end{figure}
The definition of the operation {\sc Rotate}$(r,c)$ is identical to the rectilinear version.
The operation {\sc Flip}$(r,p)$ requires more attention, since a segment may have infinitely many possible orientations.

\begin{definition}[\textbf{Flip}]
   \label{def:flip+}
   Let $r$ be a convex subdivision of $P$, let $p\in P$ be a point such
   that the segment $s$ containing $p$ does not contain any endpoints
   of other segments, and let $\sigma\in \mathbb{S}^1$ be a unit vector.
   The operation {\it {\sc Flip}$(r,p,\sigma)$} replaces $s$ by a segment
   of direction $\sigma$ containing $p$.
\end{definition}

Similarly to the graph of rectangulations $G(P)$, we define the {\it graph of convex subdivisions of $P$}, $\widehat{G}(P) = (V,E)$, where the vertex set is $V = \{ r : r$ is a convex subdivision  of $P\}$ and the edge set is $E=\{(r_1,r_2) :$ a single flip or rotate operation on $r_1$ produces $r_2\}$. Our main result in this section is that even though $\widehat{G}(P)$ is an infinite graph, its diameter is $O(n \log n)$, where $n=|V|$.

\begin{theorem}
   \label{thm:subd}
   For set $P$ of $n$ points, the graph $\widehat{G}(P)$ is connected
   and its diameter is $O(n \log n)$.
\end{theorem}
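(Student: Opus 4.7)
My plan is to mimic the proof of Theorem~\ref{thm:rect} almost verbatim, with the single canonical target being the convex subdivision $r_0$ in which every point $p\in P$ lies on a full vertical line. Since $n$ parallel vertical lines cut the plane into $n+1$ convex vertical strips, $r_0$ is a legitimate convex subdivision, and it is uniquely determined by $P$. Consequently, once we show that every convex subdivision can reach $r_0$ in $O(n\log n)$ flip/rotate operations, both connectedness of $\widehat{G}(P)$ and the diameter bound follow immediately by symmetry.

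The core is an analogue of Lemma~\ref{lem:lin}. I would generalize \texttt{Shorten\&Flip}$(r,s)$ so that the final flip is $\textsc{Flip}(r,p,\sigma)$ with $\sigma$ the vertical direction; exactly the same rotate-until-empty argument bounds its cost at $O(\deg s)$. The crucial geometric point is that because the face $f$ of $r$ that contains $p$ just before the flip is convex, the vertical line through $p$ meets $f$ in a chord, which is what \textsc{Flip} inserts; hence the two resulting subfaces are convex and the output is a valid convex subdivision. Next, define $H(r)$ as the graph whose vertices are the \emph{non-vertical} segments of $r$ and in which $s_1,s_2$ are adjacent iff some vertical segment $ab$ with $a\in s_1$, $b\in s_2$ avoids every other non-vertical segment of $r$. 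Vertical sightlines can be routed without crossings, so $H(r)$ is planar. Apply the four color theorem to extract an independent set $I$ of at least one quarter of the non-vertical segments, then invoke the generalized \texttt{Shorten\&Flip}$(r,s)$ with $\sigma$ vertical for each $s\in I$, in any order. The same two invariants as in Lemma~\ref{lem:lin} are preserved: the segments of $I$ remain pairwise disjoint, and the ones not yet processed still form an independent set in the bar-visibility graph of the current non-vertical segments. Hence each operation either processes a member of $I$ (a flip) or removes a foreign endpoint from some member of $I$ (a rotate), for a total of $O(n)$ operations in a single round. Since already-vertical segments are never flipped during the round, vertical segments stay vertical.

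Iterating the lemma, the number of non-vertical segments shrinks by a factor of $4/3$ per round, so after $O(\log n)$ rounds and $O(n\log n)$ operations every segment is vertical and the subdivision must equal the canonical form $r_0$. This shows every $r\in V$ is within distance $O(n\log n)$ of $r_0$, which gives both connectedness and the diameter bound.

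The one place where I expect care is needed, and where the argument deviates non-trivially from the rectangular case, is verifying that after an operation in the round the face geometry still permits the remaining independent flips. Concretely, I need to argue that when \texttt{Shorten\&Flip} rotates or flips a segment $s\in I$, no segment $s'\in I\setminus\{s\}$ is turned into one that has endpoints on $s$ or loses its vertical sightline to a partner used later. This is handled by the same invariant bookkeeping as in Lemma~\ref{lem:lin}: rotations only migrate endpoints from one bounding segment of a face to another (so the visibility class of $s'$ may only shrink along its current face, not cross into forbidden territory), and flips remove a vertex of $H(r)$, which can only preserve independence. The convexity of the faces, together with the fact that we flip each $s\in I$ to the vertical chord of its enclosing face through its point, makes this bookkeeping go through verbatim, completing the proof.
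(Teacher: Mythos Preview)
Your approach mirrors the paper's strategy of repeatedly flipping an independent set of non-vertical segments to vertical, but there is a genuine gap in the $O(n)$ counting argument for a single round. In the rectangular setting, every segment whose endpoint lies on some $s\in I$ is \emph{vertical}; when \texttt{Shorten\&Flip} rotates that endpoint off $s$, the extension travels vertically and therefore lands on a horizontal segment that is bar-visible from $s$, hence not in $I$. In the convex setting this breaks: the segment $t$ being extended may have any slope, so its extension beyond $s$ can land on a non-vertical segment $s'$ that is \emph{not} vertically visible from $s$ and hence may well belong to $I$. In that case the rotation moves a foreign endpoint from one $I$-segment to another, your potential ``number of endpoints lying on segments of $I$'' does not decrease, and the $O(n)$ bound per round is unjustified. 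Your hand-wave that ``rotations only migrate endpoints from one bounding segment of a face to another'' is exactly the step that fails: that other bounding segment can be in $I$.

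The paper closes this gap by introducing an additional graph, the \emph{extension visibility digraph} $\widehat{H}(r)$, with a directed edge $(s_2,s_3)$ whenever $s_2$ hits $s_3$ or some segment $s_1$ extended beyond $s_2$ would hit $s_3$. This digraph is not planar in general (the paper exhibits a $K_5$), but it has at most $4n$ edges, so a greedy minimum-degree deletion yields an independent set $I_0$ containing at least a $1/9$ fraction of the non-vertical segments; one then takes $I_1\subseteq I_0$ independent in the bar-visibility graph, of size at least $|I_0|/4\ge n/36$. Independence in $\widehat{H}$ is precisely what guarantees that every rotation performed while processing $s\in I_1$ lands the extended endpoint on a segment outside $I_0\supseteq I_1$, so the potential strictly drops and the round costs $O(n)$ operations. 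Bar visibility alone does not give you this; you need the second layer.
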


We show that any convex subdivision can be transformed into a subdivision with all segments vertical through a sequence of $O(n \log n)$ operations. Subroutine \texttt{Shorten\&Flip}$(r,s)$ from Section~\ref{sec:rect} can be adapted almost verbatim: for a unit vector $\sigma\in \mathbb{S}^1$, subroutine \texttt{Shorten\&Flip}$(r,s,\sigma)$ shortens segment $s$ maximally
by rotate operations, and then flips it to direction $\sigma$.

\begin{lemma}
   \label{lem:dir}
   Let $r$ be a convex subdivision of a set of $n$ points in the plane with distinct $x$-coordinates.
   There is a sequence of $O(n)$ flip and rotate operations that turns at least a $\frac{1}{36}$ fraction
   of the nonvertical segments vertical, and keeps all vertical segments vertical.
\end{lemma}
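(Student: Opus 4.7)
The plan is to adapt the strategy of Lemma~\ref{lem:lin} to the setting of convex subdivisions. The goal is to identify a positive constant fraction of the non-vertical segments of $r$ that can be flipped to vertical without mutual interference, and then process them using the subroutine \texttt{Shorten\&Flip}. The first step is a geometric validity claim: for any non-vertical segment $s=ab$ that has been shortened (by rotations) so that it contains no endpoints of other segments, the flip of $s$ to vertical produces a valid convex subdivision. Both endpoints of $s$ are T-intersections in the relative interiors of other segments $s_a,s_b$, and at each T-intersection the interior angles of the two faces $f_1,f_2$ adjacent to $s$ sum to exactly $\pi$, because the boundary of the merged region $f_1\cup s\cup f_2$ passes straight along $s_a$ at $a$ and along $s_b$ at $b$. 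All other vertices of the merged region are convex vertices inherited from $f_1$ or $f_2$, so the merged region is convex, and the vertical chord through $p$ splits it into two convex pieces.

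Next, I construct the set of segments to be flipped. Define a bar-visibility graph $\widehat{H}(r)$ on the non-vertical segments of $r$, with an edge between two segments whenever they share a face $f$ and admit a vertical line of sight through $f$. Within each convex face the boundary decomposes into an $x$-monotone upper chain and a lower chain, and within-face visibility edges run only between segments on different chains whose $x$-projections overlap; this allows the graph to be drawn planarly within each face, and hence globally. Unlike the rectangular case, a single face may carry many non-vertical boundary segments, so to guarantee non-interference I refine the coloring: partition the non-vertical segments into a constant number of classes based on chain-type and a positional or angular criterion along each chain, and combine this partition with the four-color theorem applied to a suitable planar auxiliary graph. A careful count yields a single class $I$ containing at least a $\tfrac{1}{36}$ fraction of the non-vertical segments whose members are pairwise non-interfering under the planned operations.

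Finally, the plan is to apply \texttt{Shorten\&Flip}$(r,s,(0,1))$ successively to each $s\in I$. Two invariants are maintained throughout: (i) the segments remaining in $I$ continue to form an independent set in $\widehat{H}(r)$, since rotations during shortening only extend existing segments and can only delete edges from $\widehat{H}$, while each flip merely removes a vertex; and (ii) at each step every remaining $s\in I$ still satisfies the T-intersection hypothesis of the validity claim after its own shortening. The total cost is $O(n)$, since each rotation can be charged to a distinct segment endpoint and there are $O(n)$ endpoints in total. Since only non-vertical segments in $I$ are flipped, every vertical segment of $r$ remains vertical. The main obstacle is the combinatorial argument yielding the constant $\tfrac{1}{36}$: the partition must be coarse enough that flipping all members of one class in sequence causes no interference, while still guaranteeing a constant fraction. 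The geometric validity lemma and the charging scheme for the operation count transfer almost verbatim from the proof of Lemma~\ref{lem:lin}.
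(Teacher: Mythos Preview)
Your proposal has a genuine gap: bar visibility alone does not control where the \emph{rotations} in \texttt{Shorten\&Flip} send the extended segments. When you shorten a nonvertical segment $s_2\in I$ by rotating at an endpoint $c$ of some incident segment $s_1$, the segment $s_1$ is extended beyond $s_2$ until it hits a further segment $s_3$. In a rectangulation $s_1$ is vertical, so $s_3$ is automatically bar-visible from $s_2$ and hence $s_3\notin I$; but in a convex subdivision $s_1$ may have any direction, so $s_3$ need not be bar-visible from $s_2$ and may well lie in your set $I$. This deposits a fresh endpoint on $s_3$, falsifying your invariant (i) (rotations do \emph{not} only delete edges of the visibility graph---they move endpoints onto new segments) and allowing the total number of rotations to cascade beyond $O(n)$. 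Your vague ``chain-type and positional or angular'' refinement does not address this, because the direction of $s_1$ is unrelated to the chain structure of the face containing $s_2$.

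The paper resolves this with a second, different graph: the \emph{extension visibility digraph} $\widehat{H}(r)$, which places an edge $(s_2,s_3)$ precisely when $s_2$ hits $s_3$ or some $s_1$ extended beyond $s_2$ would hit $s_3$. This digraph is not planar (the paper exhibits a $K_5$), so four-coloring is unavailable; instead one counts at most $4n$ edges, hence average degree below $8$, and a greedy/Tur\'an argument gives an independent set $I_0$ containing at least a $\tfrac{1}{9}$ fraction of the nonvertical segments. Only then does one pass to an independent set $I_1\subseteq I_0$ in the (planar) bar visibility graph restricted to $I_0$, picking up the further factor $\tfrac{1}{4}$ and yielding $\tfrac{1}{36}=\tfrac{1}{9}\cdot\tfrac{1}{4}$. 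Independence in $\widehat{H}$ is exactly what guarantees that the rotations never place new endpoints on segments of $I_1$, so the $O(n)$ charging goes through.
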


Before proving Lemma~\ref{lem:dir}, we need to introduce a few technical terms. Consider a convex subdivision $r$ of a set of $n$ points with distinct $x$-coordinates.  We say that a segment $s_1$ {\it hits} another segment $s_2$ if an endpoint of $s_1$ lies in the relative interior of $s_2$. An {\it extension of $s_1$  beyond $s_2$ hits $s_3$} if $s_1$ hits $s_2$ and $s_1$ is contained in a segment $s_1'$, such that $s_1'$ hits $s_3$ and $s_1'$ crosses at most one segment (namely, $s_2$). Note that the operation {\sc Rotate}$(r,s_1\cap s_2)$, if applicable,
would extend $s_1$ beyond $s_2$ to hit segment $s_3$. We define the {\it extension visibility digraph} $\widehat{H}(r)$ on all segments in $r$, where the vertices correspond to the segments in $r$, and we have a directed edge $(s_2,s_3)$ if $s_2$ hits $s_3$ or there is a segment $s_1$ such that an extension of $s_1$ beyond $s_2$ hits $s_3$.

\begin{figure}[htbp]
   \centering
   \includegraphics[width=0.25\columnwidth]{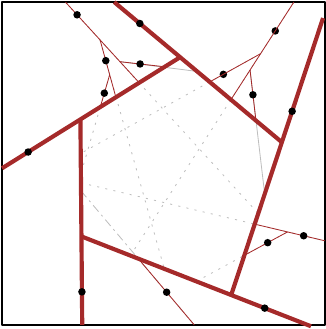}
   \caption{A convex subdivision $r$ of a set of~15 points. The five bold
            segments induce $K_5$ in the extension visibility graph
            $\widehat{H}(r)$.}
   \label{fig:k5}
\end{figure}
The graph $\widehat{H}(r)$ is not necessarily planar: it is not difficult to construct a convex subdivision $r$ for a set of $O(t^2)$ points where $\widehat{H}(r)$ contains a subgraph isomorphic to the complete graph $K_t$ (Figure~\ref{fig:k5}). It is enough to describe the segments of such a construction (the points may lie anywhere in the interior of the segments). Start with the sides of a convex $t$-gon, and extend the sides in counterclockwise direction to obtain a convex subdivision. The extension visibility digraph of these $t$ segments contains a cycle $(s_1,\ldots, s_t)$. Add short segments in the exterior of the initial $t$-gon: If a segment hits $s_i$ and its supporting line passes though $s_j$, it generates an edge $(s_i,s_j)$ in the extension visibility digraph.

Note that the number of edges in $\widehat{H}(r)$ is at most $4n$, since each segment hits at most two other segments, but some segments extend to infinity; and the extension of each segment beyond each of its endpoints hits at most one other segment. If $P$ contains $n$ points, the average degree in $\widehat{H}(r)$ is less than $8$. Therefore, $\widehat{H}(r)$ has an independent set of size at least $n/9$ (obtained by successively choosing minimum-degree vertices~\cite{Hoc83,Tur41}).

\begin{proof}[of Lemma~\ref{lem:dir}.]
   Let $r$ be a convex subdivision of a set of $n$ points with distinct $x$-coordinates. Let $I_0$ be an independent set in the extension visibility graph $\widehat{H}(r)$ induced by all nonvertical segments. As noted above, $I_0$ contains at least $1/9$ of the nonvertical segments in $r$. Let $I_1\subseteq I_0$ be an independent set in the bar visibility graph of the segments in $I_0$ (two nonvertical segments in $I_0$ are mutually visible if there is a
   vertical segment between them that does not cross any segment of the subdivision). Since the bar visibility graph is planar, we have $|I_1|\geq |I_0|/4$, and so $I_1$ contains at least a $1/36$ fraction of the nonvertical segments in $r$. The total number of segment endpoints that lie in the relative interior of segments in $I_1$ is $O(n)$. An invocation of subroutine \texttt{Shorten\&Flip}$(r,s,(0,1))$ for each segment $s\in I_1$ changes their orientation to vertical.

   The operations maintain the invariants that
   (1)~the segments in $I_1$ are pairwise disjoint; and
   (2)~the nonvertical segments in $I_1$ form an independent set in both $\widehat{H}$ and the bar visibility graph of all segments in $I_0$.
   It follows that each operation either decreases the number of nonvertical segments in $I_1$ (flip), or decreases the number of segment endpoints that lie in the relative interior of a nonvertical segment in $I_1$ (rotate). After performing $O(n)$ operations, all segments in $I_1$ become vertical. Since only the segments in $I_1$ change orientation (each of them is flipped to become vertical), all vertical segments in $r$ remain vertical, as required.
\end{proof}

\begin{proof}[of Theorem~\ref{thm:subd}.]
   Let $P$ be a set of $n$ points in a bounding box. We may assume, by rotating the point set if necessary, that the points in $P$ have distinct $x$-coordinates. Denote by $r_0$ the convex subdivision given by $n$ vertical line segments, one passing through each point in $P$.

   Consider a convex subdivision $r_1$ of $P$. By Lemma~\ref{lem:dir}, $O(n)$ operations can decrease the number of nonhorizontal segments by a factor of at least $36/35$.  After at most $\log n/\log (36/35)$ invocations of Lemma~\ref{lem:dir}, the number of nonvertical segments drops below~1, that is, all segments become vertical and we obtain $r_0$, as claimed.
\end{proof}

\subsection*{A linear upper bound for collinear points}

We show that the bound $O(n \log n)$ on the diameter of the flip graph
$\widehat{G}(P)$ from Theorem~\ref{thm:subd} can be improved to $O(n)$ for
some simple point configurations.

\begin{theorem}
   \label{thm:collinear}
   For every $n\in \mathbb{N}$, the diameter of $\widehat{G}(P)$ is
   $O(n)$ when $P$ is a set of $n$ collinear points.
\end{theorem}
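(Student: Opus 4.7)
I would prove Theorem~\ref{thm:collinear} by adapting the phased strategy of Theorem~\ref{thm:diagonal} to the collinear setting. Assume without loss of generality that $P=\{p_1,\ldots,p_n\}$ lies on the $x$-axis with $x(p_1)<\cdots<x(p_n)$, and let $r_0$ denote the canonical convex subdivision consisting of $n$ vertical lines through the $p_i$. My goal is to show that any convex subdivision of $P$ can be transformed into $r_0$ using $O(n)$ flip and rotate operations.

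The first phase runs an analog of \texttt{NoThreeParallel}: while there are three consecutive points $p_{i-1},p_i,p_{i+1}$ whose segments are co-oriented (for example, all three with slopes of the same sign, or all three vertical), apply \texttt{Shorten\&Flip}$(r,\seg{p_i},\mathbf v)$ with a direction $\mathbf v$ transverse to its neighbors, breaking the triple. The charging argument from Phase~1 in the proof of Theorem~\ref{thm:diagonal}---each segment can be extended at most once in each of its two directions---carries over almost verbatim and bounds this phase by $O(n)$ operations.

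The second phase sweeps the points from left to right: at step $i$ call \texttt{Shorten\&Flip}$(r,\seg{p_i},\mathbf v)$ with $\mathbf v$ vertical, rendering $\seg{p_i}$ a vertical segment through $p_i$. Each rotation during this phase extends some segment $\seg{p_j}$ along its current direction. Charge such a rotation to the pair $(j,\epsilon)\in\{1,\ldots,n\}\times\{\uparrow,\downarrow\}$ indicating which endpoint of $\seg{p_j}$ is advanced. Because $P$ is collinear, $\seg{p_j}$ always passes through the fixed point $p_j$, so the $\epsilon$-endpoint advances monotonically along a fixed ray from $p_j$ until $\seg{p_j}$ itself is flipped. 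Combined with the no-co-oriented-triples guarantee of Phase~1, each pair $(j,\epsilon)$ can be charged only $O(1)$ times, yielding $O(n)$ rotations across the sweep. A final cleanup extends each of the resulting vertical segments into a full vertical line in $O(n)$ further rotations, producing $r_0$.

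The main obstacle is justifying the amortization in the second phase---that each pair $(j,\epsilon)$ is charged $O(1)$ times. The intuition is that collinearity pins down the line of $\seg{p_j}$, forcing extensions of its $\epsilon$-endpoint to make monotone progress along a fixed ray, while Phase~1 rules out configurations in which an endpoint could oscillate between nearly parallel neighbors. Formalizing this will likely require a potential-function argument built from projections of segment endpoints onto the $x$-axis, and it is precisely this monotone-progress property---absent in the general convex-subdivision setting---that enables the linear bound and distinguishes this case from the $O(n\log n)$ bound of Theorem~\ref{thm:subd}.
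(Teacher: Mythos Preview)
Your proposal has a real gap, and it is precisely the one you flag yourself. In Phase~2 you observe that the $\epsilon$-endpoint of $\seg{p_j}$ moves monotonically along a fixed ray through $p_j$; but monotone progress along an unbounded ray does not by itself bound the \emph{number} of extensions, so this observation alone cannot give $O(1)$ charges per pair $(j,\epsilon)$. You appeal to the Phase~1 guarantee of no three consecutive co-oriented segments, but that invariant is destroyed the moment you start flipping segments to vertical in Phase~2: after processing $p_1,\ldots,p_{i-1}$ you have $i-1$ consecutive vertical segments. So Phase~1 plays no role in the Phase~2 analysis, and you are left with no mechanism preventing $\seg{p_k}$ from being extended once for each of many $\seg{p_i}$ it successively hits. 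Your Phase~1 claim that the charging from Theorem~\ref{thm:diagonal} ``carries over almost verbatim'' is also suspect: that argument derives a contradiction from an axis-aligned rectangle whose right side cannot exist, and there is no obvious analog for segments of arbitrary slope.

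The paper takes a genuinely different route. Rather than flipping directly to vertical, its Phase~1 flips each $\seg{p_i}$ to a slope of $(-1)^{i+1}m/n$, where $m$ is the minimum absolute slope in the initial subdivision. Because $m/n$ is strictly smaller than every existing slope, the flipped segment (after one extra rotation at its intersection with $\seg{p_{i-1}}$) extends to infinity on the left, and one can maintain the invariant that processed rays carry no endpoints of unprocessed segments. This invariant is exactly what forces the crossing contradiction: if $\seg{p_k}$ were extended twice in the same direction, once while processing $p_i$ and once while processing $p_j$ with $i<j<k$, then $\seg{p_i}$ and $\seg{p_j}$ would have to cross. The choice of a tiny alternating slope, not vertical, is the idea that makes the linear bound go through; once every segment is such a ray, a short explicit sequence of passes converts the alternating-slope staircase into vertical lines.
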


\begin{proof}
   We may assume that $P=\{p_i: i=1,\ldots , n\}$, $p_i=(i,0)$. Let $r$ be
   a convex subdivision of $P$. No segment in $r$ is horizontal, since
   each segment contains a unique point. We show that there is a sequence of
   $O(n)$ operations that transforms $r$ into a convex subdivision with all
   segments vertical. Suppose that not all segments in $r$ are vertical, and
   let $m>0$ be the minimum absolute value of the slope of a nonvertical
   segment. Our algorithm proceeds in two phases.

   \smallskip\noindent
   \textbf{Phase~1: Building a staircase.}
   In this phase, we transform $r$ into a convex subdivision in which
   every $\seg{p_i}$ is a nonvertical ray with a left endpoint at infinity,
   and has slope $m/n$ when $i$ is odd and $-m/n$ when $i$ is even.
   We process the points $p_1,\ldots, p_n$ successively in this order.
   We maintain the following invariant (refer to Figure~\ref{fig:collinear1}):
   \begin{figure}[htbp]
      \centering
      \includegraphics[width=0.95\columnwidth]{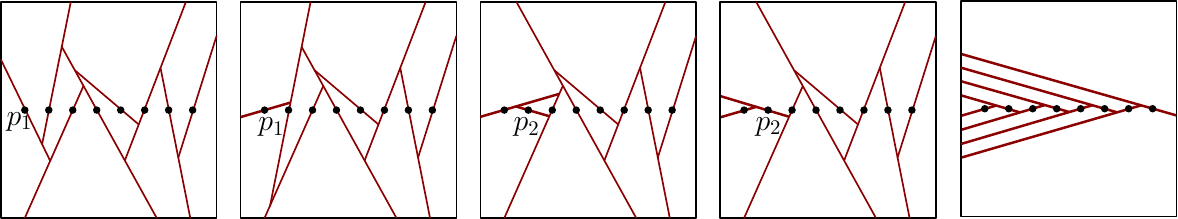}
      \caption{Transforming a convex subdivision for $n$ points lying on the
               $x$-axis into rays of alternating slopes $m/n$ and $-m/n$.}
      \label{fig:collinear1}
   \end{figure}
   When we start processing $p_i$, segments $\seg{p_1},\ldots,\seg{p_{i-1}}$
   are already rays with the desired properties, and they do not contain
   the endpoints of any segment $\seg{p_i},\ldots,\seg{p_n}$ in their
   relative interiors.

   Suppose that points $p_j$, for all $1\leq j<i$, have already been processed.
   We use the subroutine \texttt{Shorten\&Flip}$(r,\seg{p_i},\sigma)$ to shorten $\seg{p_i}$
   and flip it into the desired orientation (of slope $m/n$ or $-m/n$ depending on
   the parity of $i$). For $i=1$, this already guarantees that segment $\seg{p_1}$
   is a ray with a left endpoint at infinity. If $i>1$, however, the left endpoint
   of $\seg{p_i}$ lies on the ray $\seg{p_{i-1}}$, whose slope is different
   from that of $\seg{p_i}$. Due to the invariant, $\seg{p_{i-1}}$ contains
   no segment endpoints to the right of the intersection point $\seg{p_{i-1}} \cap \seg{p_i}$.
   We can now apply a rotation at $\seg{p_{i-1}} \cap \seg{p_i}$. Since the slopes of all
   segments are at least $m/n$ in absolute value, $\seg{p}_i$ extends to infinity,
   and our invariant is established for segments $\seg{p_1},\ldots ,\seg{p_i}$.

   We argue that Phase~1 uses $O(n)$ operations. For each point $p_i$, we invoke
   \texttt{Shorten\&Flip}$(r,\seg{p_i},\sigma)$, which performs only one flip operation,
   followed by at most one rotation.
   It is enough to bound the number of rotations performed in the invocations of subroutines
   \texttt{Shorten\&Flip}. We claim that these invocations extend each segment at
   most once in each direction. Suppose, to the contrary, that a segment $\seg{p_k}$
   is extended in the same direction twice (when processing points $p_i$ and $p_j$).
   Since the points are processed from left to right, we have $i<j<k$.
   Consider the step in which a rotation extends $\seg{p_k}$ from its intersection
   with $\seg{p_i}$ to an intersection with $\seg{p_j}$. Then, the $x$-axis and segment
   $\seg{p_k}$ intersect $\seg{p_i}$ and $\seg{p_j}$ in different orders.
   Hence, $\seg{p_i}$ and $\seg{p_j}$ must cross each other, contradicting
   the fact that they are part of a convex subdivision. This proves our
   claim. It follows that Phase~1 uses at most $5n$ operations.

   \smallskip\noindent
   \textbf{Phase~2: Orienting all lines to be vertical.}
   Refer to Figure~\ref{fig:collinear2}.
   \begin{figure}[htbp]
      \centering
      \includegraphics[width=0.95\columnwidth]{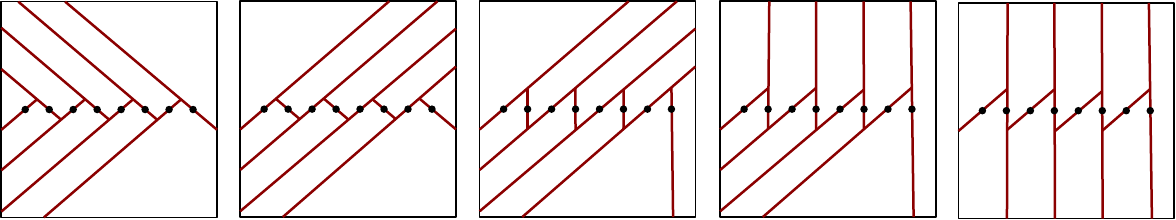}
      \caption{The first four passes over the rays yield vertical lines
               through all even points, separating the odd points.}
      \label{fig:collinear2}
   \end{figure}
   We are given a convex subdivision in which every $\seg{p_i}$ is a ray with a left endpoint at infinity, and has slope $m/n$ when $i$ is odd and $-m/n$ when $i$ is even. Note that only consecutive segments intersect. (This is still not a ``canonical'' subdivision, as $m$ depends on the initial convex subdivision.)
   In Phase~2, we transform all segments into vertical lines.
   We make five passes over the odd or even points:
   (1)~For every odd point $p_i$ from right to left, rotate ray $\seg{p_i}$ into a line of slope $m/n$.
   As a result, the lines through the odd points separate the even points.
   (2)~We can now flip the segments through the even points independently into vertical segments.
   (3)~For every even point $p_i$ from left to right, rotate the top endpoints to infinity.
   (4)~For every even point from right to left, rotate the bottom endpoint to infinity. We obtain vertical lines through the even points, that separate the odd points.
   (5)~Finally, flip the segments through the odd points independently into vertical lines. We made three passes over even segments and two passes over odd segments, so the total number of operations in Phase~2 is $2n+\lceil n/2\rceil\leq 3n$. The two phases combined use at most $8n$ operations.
\end{proof}

\section{Conclusions}

We have shown that the diameter of the flip graph $G(P)$ is between  $\Omega(n)$ and $O(n \log n)$ for every $n$-element point set $P$, and these bounds cannot be improved. The diameter is $\Theta(n)$ for diagonal point sets, and $\Theta(n \log n)$ for the bit-reversal point set. The flip graph $G(P)$ of a noncorectilinear set $P$ is uniquely determined by the permutation of the $x$- and $y$-coordinates of the points~\cite{ABP06} (e.g., diagonal point sets correspond to the identity permutation). It is an open problem to find the average diameter of $G(P)$ over all $n$-element permutations. It would already be of interest to find broader families of point sets with linear diameter:
Is the diameter of $G(P)$ linear if $P$ is in convex position, unimodal, or corresponds to a separable permutation (see~\cite{BBL98})?

We have shown that the diameter of the flip graph is also 
$O(n \log n)$ for the convex subdivisions of $n$ points in the plane. We do not know whether this bound is tight. It is possible that the flip diameter is $\Omega(n \log n)$ for the bit-reversal point set defined in Section~\ref{sec:construction}, but our proof of Theorem~\ref{thm:constuction} heavily relies on axis-aligned boxes and does not seem to extend to convex subdivisions.

Given a convex subdivision $r$ of a point set $P$, the flip and rotate operations can be thought of as a continuous deformation; refer to Figure~\ref{fig:operations3}:
\begin{figure}[htbp]
   \centering
   \includegraphics[width=.7\columnwidth]{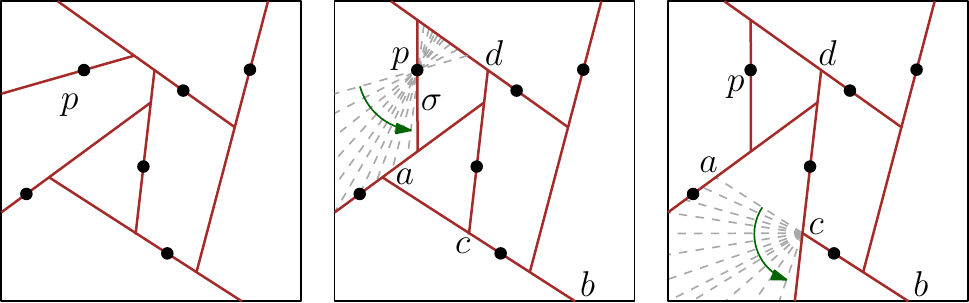}
   \caption{Left: a convex subdivision $r_1$ of~6 points.
            Middle: operation $\mbox{{\sc Flip}}(r_1,p,\sigma)$ performed
            as a continuous deformation of the segment containing $p$.
            Right: operation $\mbox{{\sc Rotate}}(r_2,c)$  performed as a
            continuous deformation of part of the segment containing $c$.}
   \label{fig:operations3}
\end{figure}
{\sc Flip}$(r,p,\sigma)$ rotates the segment containing $p$ continuously to position $\sigma$; and {\sc Rotate}$(r,c)$ rotates continuously a portion of the segment containing $c$ into the extension of the segment that currently ends at $c$. The {\it weight} of an operation can be defined as the number of vertices swept during this continuous deformation. By Theorem~\ref{thm:subd}, a sequence of $O(n \log n)$ operations can transform any convex subdivision to any other convex subdivision on $n$ points. A single operation, however, may have $\Omega(n)$ weight. We conjecture that the weighted diameter of the graph $\widehat{G}(P)$ is also $O(n \log n)$ for every $n$-elements point set $P$.

\paragraph{Acknowledgments.}
Research on this paper was initiated at the Workshop on
{\it Counting and Enumerating of Plane Graphs} held in March~2013 at
Schloss Dagstuhl.
We thank Sonia Chauhan, Michael Hoffmann, and Andr\'{e} Schulz for
insightful comments and stimulating conversations about the problems
discussed in this paper.

\nocite{*}
\bibliographystyle{abbrvnat}


\end{document}